\DeclareFontFamily{OT1}{pzc}{}
\DeclareFontShape{OT1}{pzc}{m}{it}{<-> s * [1.200] pzcmi7t}{}
\DeclareMathAlphabet{\mathpzc}{OT1}{pzc}{m}{it}
\newcommand{\mV}{\mathrm{V}}
\newcommand{\mG}{\mathrm{G}}
\newcommand{\mE}{\mathrm{E}}
\newcommand{\R}{\mathbb{R}}
\newcommand{\Z}{\mathbb{Z}}
\newcommand{\E}{\mathbb{E}}
\newcommand{\betab}{{\bm \beta}}
\newcommand{\yb}{\mathbf{y}}
\newcommand{\Ib}{\mathbf{I}}
\newcommand{\1}{\mathbf{1}}
\newcommand{\Xb}{\mathbf{X}}
\newcommand{\mcD}{\mathcal{D}}
\newcommand{\mN}{\mathrm{N}}
\newcommand{\PPo}{\mathpzc{P}_{\Omega}}
\newcommand{\omegab}{{\bm \omega}}
\newcommand{\betaf}{\betab^\flat}
\newcommand{\betas}{\betab^\sharp}
\newcommand{\MM}{\mathpzc{M}}
\newcommand{\mR}{\mathrm{R}}
\newcommand{\Wb}{\mathbf{W}}
\newcommand{\tX}{\tilde{\Xb}}
\newcommand{\ty}{\tilde{\yb}}
\newcommand{\eb}{\mathbf{e}}
\newcommand{\mO}{\mathcal{O}}
\DeclareMathOperator{\diag}{diag}
\DeclareMathOperator*{\argmin}{argmin}
\DeclareMathOperator{\rank}{rank}
\DeclareMathOperator{\Lap}{Lap}
\DeclareMathOperator{\Lapm}{\Lap^m}
\DeclareMathOperator{\pdf}{pdf}
\newcommand*\mcup{\mathbin{\mathpalette\mcupinn\relax}}
\newcommand*\mcupinn[2]{\vcenter{\hbox{$\mathsurround=0pt
			\ifx\displaystyle#1\textstyle\else#1\fi\bigcup$}}}
\title{Differentially Private Linear Regression over Fully Decentralized Datasets}
\author{%
  Yang~Liu\\
  Tencent Cloud Product Department\\
  Tencent\\
  Shenzhen 518057, China \\
  \texttt{clarkieliu@tencent.com} \\
   \And
   Xiong~Zhang \\
   Tencent Cloud Product Department \\
   Tencent \\
   Shenzhen 518057, China \\
   \texttt{farleyzhang@tencent.com} \\
   \And
   Shuqin~Qin \\
   Tencent Cloud Product Department \\
	Tencent \\
	Shenzhen 518057, China \\
   \texttt{sookieqin@tencent.com} \\
   \And
   Xiaoping~Lei \\
   Tencent Cloud Product Department \\
	Tencent \\
	Shenzhen 518057, China \\
   \texttt{edenlei@tencent.com} \\
}
\newtheorem{theorem}{Theorem}
\newtheorem{definition}{Definition}
\newtheorem{assumption}{Assumption}
\begin{document}

\maketitle

\begin{abstract}
This paper presents a differentially private algorithm for linear regression learning in a decentralized fashion. Under this algorithm, privacy budget is theoretically derived, in addition to that the solution error is shown to be bounded by $\mO(t)$ for $\mO(\frac{1}{t})$ descent step size and $\mO(\exp(t^{1-e}))$ for $\mO(\frac{1}{t^e})$ descent step size.
\end{abstract}

\section{Introduction}

In recent years, optimization and learning among fully decentralized parties are drawing much attention \cite{nedic2009distributed,nedic2010constrained,boyd2011distributed}. However, privacy concerns are not taken into account in much of the work. Although \cite{huang2015differentially} presents a private distributed convex optimizer by incorporating the famous notion of differential privacy \cite{dwork2011differential}, too strong boundedness assumptions on the objectives must hold. In this paper, we specify the objective as the famous least squares, and provide a differentially private decentralized solver, as well as privacy and accuracy results with relaxed assumptions.


\section{Problem Definition}

\subsection{Decentralized Datasets over Networks}
Let $\mV=\{1,\dots,k\}$ represent a group of decentralized parties that aim to participate in a global computational task. As a setup of this paper, the parties in $\mV$, termed as nodes, are peer-to-peer interconnected to locally establish two-way communication, described by edges in a set of unordered pair of nodes $\mE=\{\{i,j\}:i,j\textnormal{ are connected},i,j\in\mV\}$. Based on the edge set $\mE$, one can define the neighbor set of node $i$ as $\mN_i=\{j:\{i,j\}\in\mE\}\mcup\{i\}$. Over such a network $\mG=(\mV,\mE)$, which is assumed to be connected throughout this paper, nodes $i\in\mV$ hold mutually exclusive and homogeneous datasets $\mcD_i\in\R^{n_i\times m}\times\R^{n_i}$, respectively, including the design matrix $\Xb_i\in\R^{n_i\times m}$ and the label vector $\yb_i\in\R^{n_i}$. 	One of the foundational assumptions of this paper is that $\mcD_i$ is seen as privacy by each node $i$.

\subsection{Existing Decentralized Linear Regression Algorithm}

Linear regression is a common model that arises in various disciplines. Consider a design matrix $\Xb\in\R^{n\times m}$ and a label vector $\yb\in\R^n$. Then the learning goal of linear regression is to solve the following least-squares problem:
\begin{equation}\label{eq:linear_regression}
\min_{\betab\in\R^m}\qquad \frac{1}{2}\|\Xb\betab-\yb\|^2.
\end{equation}
It is well-known that (\ref{eq:linear_regression}) yields a unique optimal estimate $\betab^\ast=(\Xb^\top\Xb)^{-1}\Xb^\top\yb$ if $\Xb$ has full column rank.
By letting $n = \sum\limits_{i=1}^k n_i$, $\Xb=[\Xb_1^\top\ \dots\ \Xb_k^\top]^\top$ and $\yb=[\yb_1^\top\ \dots\ \yb_k^\top]^\top$,
we finally obtain a decentralized linear regression modelling task (\ref{eq:linear_regression}) over network $\mG$. A fully decentralized algorithm for solving (\ref{eq:linear_regression}) is described by the following dynamics \cite{nedic2010constrained}:
\begin{equation}\label{eq:nedic}
\betab_i(t+1) = \sum\limits_{j\in\mN_i} w_{ij}\betab_j(t) - \alpha(t)\nabla L_i(\betab_i(t)), 
\end{equation}
where $t=0,1,2,\dots$ is the discretized time, $\betab_i(t)$ is node $i$'s current estimate towards the global model, edge weight $w_{ij}>0$ is defined over $j\in\mN_i$ satisfying $w_{ij}=w_{ji}$ and $\sum\limits_{j\in\mN_i}w_{ij}=1$ for all $i\in\mV$, $\alpha:\Z^{\ge0}\to\R^+$ is the step size, and
$L_i(\betab)=\frac{1}{2}\|\Xb_i\betab-\yb_i\|^2$.
It was proved that if $\sum\limits_{t=0}^\infty\alpha(t)=\infty$ and $\lim\limits_{t\to\infty}\alpha(t)=0$, then $\lim\limits_{t\to\infty} \betab_i(t)=\betab^\ast$ for all $i\in\mV$ \cite{liu2018network}. Typical selections of $\alpha(t)$ include $\alpha(t)=\frac{c}{(t+d)^e}$ with $c,d>0$ and $0<e\le1$. Evidently, the contents shared among nodes are $\{\betab_i(t)\}_{i\in\mV,t\in\Z^{\ge0}}$, which contain the information of $\nabla L_i$ and thereby $\mcD_i$. When confronted with global adversaries capable of observing the communication contents, the algorithm (\ref{eq:nedic}) leads to undesirable privacy disclosure. Therefore, a privacy-preserving version of (\ref{eq:nedic}) is demanded.

\section{Main Results}

In this section, we propose a privacy-preserving version of (\ref{eq:nedic}), and provide corresponding differential privacy and accuracy analysis. To facilitate the presentation of our algorithm, we first introduce the following assumption.
\begin{assumption}\label{ass:Omega}
	All nodes of the network $\mG$ knows that the optimal estimate $\betab^\ast\in\R^m$ falls into a compact and convex set $\Omega\subset\R^m$ with $B_\Omega=\sup\limits_{\betab\in\Omega} \|\betab\|$.
\end{assumption}
Note that Assumption \ref{ass:Omega} is reasonable in the sense that heuristic approaches can be applied to find $\Omega$. For example, if $\rank(\Xb_i)=m$, each node $i$ can present a convex set $\Omega_i\subset\R^m$ containing its local optimal estimate $\betab^\ast_i=\argmin\limits_{\betab\in\R^m}L_i(\betab)$, and $\Omega$ can be set as a convex hull of $\bigcup\limits_{i\in\mV}\Omega_i$. Such methods are out of scope, and thereby not comprehensively investigated in this paper.

\subsection{Privacy-Preserving Algorithm}

Define $\PPo(\betab)=\inf\limits_{\betab^\prime\in\Omega}\|\betab-\betab^\prime\|$ as the projection onto $\Omega$. Inspired by (\ref{eq:nedic}), we provide the following privacy-preserving linear regression algorithm that terminates in finite time $T\ge1$.

\begin{algorithm}[H]
	\begin{algorithmic}[1]
		\STATE Set $t\gets0$ and initialize $\betab_i(0)$ for all $i\in\mV$.
		\STATE Each node $i$ draws $\omegab_i(t)\in\R^m$ from the distribution $\Lapm(v(t))$ satisfying $\lim\limits_{t\to\infty}v(t)=0$.
		\STATE Each node $i$ computes and propagates $\betaf_i(t)\gets\betab_i(t)+\omega(t)$ to its neighbors $j\in\mN_i$.
		\STATE Each node $i$ computes the projected state $\betas_i(t)\gets\PPo(\betaf_i(t))$.
		\STATE Each node $i$ updates its state by $\betab_i(t+1) \gets \sum\limits_{j\in\mN_i}w_{ij}\betas_j(t)-\alpha(t)\nabla L_i(\betas_i(t))$.
		\STATE Set $t\gets t+1$. Algorithm terminates if $t=T$, otherwise go to Step 2.
	\end{algorithmic}
	\caption{$T$-step Privacy-Preserving Linear Regression}
	\label{alg:dplr}
\end{algorithm}
As can be noted, under Algorithm \ref{alg:dplr} each node injects Laplace random noise before true estimate propagation. After receiving the slightly distorted estimate, each node projects it onto the convex set containing the optimum to avoid the divergence of learning process.

\subsection{Differential Privacy}

Now we analyze the differential privacy of Algorithm \ref{alg:dplr}. Relevant notions based on \cite{dwork2011differential} are provided in the following.
\begin{definition}
	Consider two network datasets $\mcD=(\Xb,\yb)$ and $\mcD^\prime=(\Xb^\prime,\yb^\prime)$ in $\R^{n\times m}\times\R^{n}$ with $n=\sum\limits_{i=1}^k n_i$. Then $\mcD$ and $\mcD^\prime$ are said to be $(\delta_X,\delta_y)$-adjacent if there exists $i\in\{1,\dots,k\}$ such that (i) $\|\Xb_i\|,\|\Xb_i^\prime\|\le\delta_X$ and $\|\yb_i\|,\|\yb_i^\prime\|\le\delta_y$; (ii) $\Xb_j=\Xb_j^\prime$ and $\yb_j=\yb_j^\prime$ for all $j\neq i$.
\end{definition}
	Clearly, the adversaries against Algorithm \ref{alg:dplr} observe all communication contents among nodes $\{\betaf_i(t)\}_{i\in\mV,t=0,\dots,T-1}$, based on which they aim to infer the privacy $\mcD$. Such an adversarial relation can be intrinsically described by a mapping $\MM_T:\R^{n\times m}\times\R^n\times\R^{km}\to\R^{kmT}$ with
	$$\MM_T(\mcD,\{\betaf_i(0)\}_{i\in\mV})= \{\betaf_i(t)\}_{i\in\mV,t=0,\dots,T-1}.$$
	Then the following definition is provided on the differential privacy of Algorithm \ref{alg:dplr}.
\begin{definition}
Algorithm \ref{alg:dplr} in $T$-step preserves $\epsilon$-differential privacy under $(\delta_X,\delta_y)$-adjacency if for all $\mR\subset\R^{kmT}$ and for all $\{\betaf_i(0)\}_{i\in\mV}\in\R^{km}$, there holds
$$\Pr(\MM_T(\mcD,\{\betaf_i(0)\}_{i\in\mV})\in\mR)\le e^\epsilon \Pr(\MM_T(\mcD^\prime,\{\betaf_i(0)\}_{i\in\mV})\in\mR)$$
for all $(\delta_X,\delta_y)$-adjacent network datasets $\mcD,\mcD^\prime\in\R^{n\times m}\times\R^n$.
\end{definition}
For Algorithm \ref{alg:dplr}, we provide the following theorem.
\begin{theorem}\label{thm:differential_privacy}
Let Assumption \ref{ass:Omega} hold. Then there exists finite $\epsilon>0$ such that Algorithm \ref{alg:dplr} in $T$-step preserves $\epsilon$-differential privacy under $(\delta_X,\delta_y)$-adjacency as $T$ goes to infinity if $\big\{\frac{\alpha(t)}{v(t+1)}\big\}_{t=0}^\infty$ is summable. In particular, if $\alpha(t)=\frac{c_\alpha}{(t+d_\alpha)^{e_\alpha}}$ and $v(t)=\frac{c_v}{(t+d_v)^{e_v}}$ with $c_\alpha,e_\alpha,c_v,e_v>0$ and $1<d_v+1\le d_\alpha$, then Algorithm \ref{alg:dplr} in $T$-step preserves
$$4\delta_Xc_\alpha c_v^{-1}T\sqrt{mn_M}(\delta_XB_\Omega\sqrt{km}+\delta_y)$$
-differential privacy with $n_M=\max\{n_i:i\in\mV\}$.
\end{theorem}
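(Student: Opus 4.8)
The plan is to bound the privacy loss through a change-of-variables (noise-coupling) argument. Fix an arbitrary realization of the observed sequence $\{\betaf_i(t)\}_{i\in\mV,\,t=0,\dots,T-1}$ and observe that, given the dynamics of Algorithm \ref{alg:dplr}, the true state $\betab_i(t)$ is a deterministic function of the \emph{past} observations $\{\betaf_j(s)\}_{s<t}$ (through the projections $\betas_j(s)=\PPo(\betaf_j(s))$ and the update in Step 5), with $\betaf_i(0)$ being conditioned upon. Consequently the map $\{\omegab_i(t)\}\mapsto\{\betaf_i(t)\}$ defined by $\betaf_i(t)=\betab_i(t)+\omegab_i(t)$ is lower-triangular in time with identity diagonal, so its Jacobian determinant is $1$ and the density of $\MM_T$ at a given output equals the product of Laplace densities evaluated at the noise that reproduces that output. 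The privacy loss at a fixed output is therefore $\prod_{i,t}p_{v(t)}(\omegab_i(t))/p_{v(t)}(\omegab_i^\prime(t))$, where $\omegab_i(t)$ and $\omegab_i^\prime(t)$ are the noise values reconstructing the \emph{same} output under $\mcD$ and $\mcD^\prime$.

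The crucial structural step is to show that these reconstructed noises differ only at the single node $i$ where $\mcD,\mcD^\prime$ disagree, and that the difference does not accumulate over time. Since the observed sequence is held fixed, every projected state $\betas_j(t)=\PPo(\betaf_j(t))$ is identical under both datasets. Feeding this common state into Step 5 and using $L_j=L_j^\prime$ for $j\neq i$, one obtains $\betab_j(t)=\betab_j^\prime(t)$ for all $j\neq i$ and, for the differing node,
$$\betab_i(t)-\betab_i^\prime(t)=-\alpha(t-1)\big(\nabla L_i(\betas_i(t-1))-\nabla L_i^\prime(\betas_i(t-1))\big),\qquad t\ge1,$$
with $\betab_i(0)=\betab_i^\prime(0)$. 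Hence the only noise adjustment is $\omegab_i^\prime(t)-\omegab_i(t)=\betab_i(t)-\betab_i^\prime(t)$, a \emph{single-step} gradient mismatch rather than a compounding error; this is precisely what makes the per-step contributions summable.

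It then remains to bound each contribution. Using the product-Laplace form of $\Lapm(v)$ and the reverse triangle inequality gives $p_{v(t)}(\omegab_i(t))/p_{v(t)}(\omegab_i^\prime(t))\le\exp(\|\omegab_i(t)-\omegab_i^\prime(t)\|_1/v(t))$, so that
$$\epsilon\le\sum_{t=1}^{T-1}\frac{\alpha(t-1)}{v(t)}\,\big\|\nabla L_i(\betas_i(t-1))-\nabla L_i^\prime(\betas_i(t-1))\big\|_1 .$$
I would bound the gradient mismatch via $\nabla L_i(\betab)=\Xb_i^\top(\Xb_i\betab-\yb_i)$ together with the adjacency bounds $\|\Xb_i\|,\|\Xb_i^\prime\|\le\delta_X$, $\|\yb_i\|,\|\yb_i^\prime\|\le\delta_y$ and the projection bound $\|\betas_i(t-1)\|\le B_\Omega$, then convert from $\ell_2$ to $\ell_1$ with a $\sqrt{m}$-type factor, yielding a per-step coefficient of the form $\delta_X(\delta_X B_\Omega+\delta_y)$ up to dimensional constants. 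Since the summand is exactly $\alpha(t-1)/v(t)=\alpha(s)/v(s+1)$ after reindexing, summability of $\{\alpha(t)/v(t+1)\}_{t=0}^\infty$ immediately bounds the series as $T\to\infty$, proving the first claim. For the explicit bound I would substitute the polynomial forms and use $1<d_v+1\le d_\alpha$ to dominate each ratio $\alpha(t-1)/v(t)$ by $c_\alpha/c_v$, so the sum of $T$ terms produces the factor $c_\alpha c_v^{-1}T$ and the stated constant.

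I expect the main obstacle to be making the change-of-variables rigorous: justifying that conditioning on the observed trajectory fixes all projected states and collapses the likelihood ratio to a product of Laplace densities, and that exactly one node's noise must be re-routed to reproduce the output. A secondary difficulty is the norm bookkeeping --- tracking the operator-versus-Frobenius and $\ell_1$-versus-$\ell_2$ conversions, and the dependence on $\betas_i$ through the whole network, carefully enough to land the precise dimensional factors $\sqrt{mn_M}$ and $\sqrt{km}$ in the final constant.
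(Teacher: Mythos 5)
Your proposal is correct and takes essentially the same route as the paper: the paper also decomposes $\MM_T$ into one-step mechanisms conditioned on the observed trajectory, bounds each step's Laplace density ratio by $\exp\big(\alpha(t)v^{-1}(t+1)\cdot(\|\Xb_{i^\ast}^\top\Xb_{i^\ast}-\Xb_{i^\ast}^{\prime\top}\Xb_{i^\ast}^\prime\|_1\|\PPo^\ast(\betaf(t))\|_1+\|\Xb_{i^\ast}^\top\yb_{i^\ast}-\Xb_{i^\ast}^{\prime\top}\yb_{i^\ast}^\prime\|_1)\big)$, i.e.\ exactly your single-step gradient-mismatch sensitivity at the projected states, and then concludes via summability of $\alpha(t)/v(t+1)$ and the same $c_\alpha/c_v$ domination for the explicit budget. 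The only cosmetic difference is that you prove the sequential-composition step inline through the noise-reconstruction (change-of-variables) argument, whereas the paper simply invokes the composition property of differential privacy.
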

\begin{proof}
We will use the compact notation
$
\betab(t)=[
\betab_1(t)^\top\ \dots\ \betab_k(t)^\top
]^\top\in\R^{km}
$
for $\betab_i(t)$, and the same form will also appear for $\betaf_i(t)$ and $\betas_i(t)$, whose introduction will be omitted. The underlying dynamics of Algorithm \ref{alg:dplr} can be written as
\begin{align}
\betaf(t+1) = (\Wb\otimes\Ib_m)\PPo^\ast(\betaf(t)) - \alpha(t)G(\PPo^\ast(\betaf(t))) + \omegab(t+1),\label{eq:dynamics}
\end{align}
where the $ij$--th element of $\Wb\in\R^{k\times k}$ equals $w_{ij}$ if $j\in\mN_i$ and zero otherwise,
$
\PPo^\ast(\betaf(t)) =
[
\PPo(\betaf_1(t))^\top\ \cdots\ \PPo(\betaf_k(t))^\top
]^\top$, and $
G(\PPo^\ast(\betaf(t))) =
[
\nabla L_1(\PPo(\betaf_1(t)))^\top\ \cdots\ \nabla L_k(\PPo(\betaf_k(t)))^\top
]^\top= \tX\PPo^\ast(\betaf(t))-\ty
$
with $\tX=\diag(\Xb_1^\top\Xb_1,\dots,\Xb_k^\top\Xb_k)$ and $\ty=[\yb_1^\top\Xb_1\ \dots\ \yb_k^\top\Xb_k]^\top$. Define $\MM^{(t)}(\mcD,\betaf(t))=\betaf(t+1)$ such that $\MM_T(\{\betaf_i(0)\}_{i\in\mV})=\{\MM^{(\tau)}\circ\dots\circ\MM^{(0)}:\tau=0,\dots,T-1\}$ when omitting $\mcD$. Then for any $\mcD,\mcD^\prime$ differing at node $i^\ast$'s dataset w.l.o.g., there hold for all $t\ge0$ based on (\ref{eq:dynamics})
\begin{align}
&\quad\frac{\Pr(\MM^{(t)}(\mcD,\betaf(t))=\betaf(t+1))}{\Pr(\MM^{(t)}(\mcD^\prime,\betaf(t))=\betaf(t+1))}\notag\\
&\overset{\rm a)}{=} \frac{\pdf(\betaf(t+1)-(\Wb\otimes\Ib_m)\PPo^\ast(\betaf(t)) + \alpha(t)G(\PPo^\ast(\betaf(t)));v(t+1))}{\pdf(\betaf(t+1)-(\Wb\otimes\Ib_m)\PPo^\ast(\betaf(t)) + \alpha(t)G^\prime(\PPo^\ast(\betaf(t)));v(t+1))}\notag\\
&\overset{\rm b)}{\le}\exp\big(\alpha(t)v^{-1}(t+1)(\|\tX-\tX^\prime\|_1\|\PPo^\ast(\betaf(t))\|_1+\|\ty-\ty^\prime\|_1)\big)\notag\\
&\le\exp\big(\alpha(t)v^{-1}(t+1)(\|\Xb_{i^\ast}^\top\Xb_{i^\ast}-\Xb_{i^\ast}^{\prime\top}\Xb_{i^\ast}^\prime\|_1\|\PPo^\ast(\betaf(t))\|_1+\|\Xb_{i^\ast}^\top\yb_{i^\ast}-\Xb_{i^\ast}^{\prime\top}\yb_{i^\ast}^\prime\|_1)\big),\label{eq:dp2}
\end{align}
where a) is from the Laplace distribution and b) is an application of norm inequalities. Based on norm inequalities and equivalence \cite{horn2012matrix}, one has
\begin{equation}\label{eq:dp3}
\begin{aligned}
\|\Xb_{i^\ast}^\top\Xb_{i^\ast}-\Xb_{i^\ast}^{\prime\top}\Xb_{i^\ast}^\prime\|_1 &= \bigg\|
\begin{bmatrix}
\Xb_{i^\ast}^\top & -\Xb_{i^\ast}^{\prime\top}
\end{bmatrix}
\begin{bmatrix}
\Xb_{i^\ast}\\
\Xb_{i^\ast}^\prime
\end{bmatrix}
\bigg\|_1 \le (\|\Xb_{i^\ast}^\top\|_1 +\|\Xb_{i^\ast}^{\prime\top}\|_1)(\|\Xb_{i^\ast}\|_1+\|\Xb_{i^\ast}^\prime\|_1)\\
&\le \sqrt{mn_{i^\ast}}(\|\Xb_{i^\ast}\| +\|\Xb_{i^\ast}^{\prime}\|)^2 \le 4\delta_X^2\sqrt{mn_M}.
\end{aligned}
\end{equation}
Similarly, we have
\begin{equation}\label{eq:dp4}
\|\Xb_{i^\ast}^\top\yb_{i^\ast}-\Xb_{i^\ast}^{\prime\top}\yb_{i^\ast}^\prime\|_1 \le 4\delta_X\delta_y\sqrt{mn_M}.
\end{equation}
According to (\ref{eq:dp2}), (\ref{eq:dp3}) and (\ref{eq:dp4})
\begin{align}
\frac{\Pr(\MM^{(t)}(\mcD,\betaf(t))=\betaf(t+1))}{\Pr(\MM^{(t)}(\mcD^\prime,\betaf(t))=\betaf(t+1))}\le\exp\big(4\delta_X\sqrt{mn_M}(\delta_XB_\Omega\sqrt{km}+\delta_y)\alpha(t)v^{-1}(t+1)\big).\label{eq:dp5}
\end{align}
Based on (\ref{eq:dp5}) and the composition property \cite{mcsherry2009privacy}, this proof is completed.
\end{proof}

\subsection{Accuracy Analysis}

\begin{theorem}\label{thm:accuracy}
Let Assumption \ref{ass:Omega} hold. Suppose $\alpha(t)=\mO(\frac{1}{t^{e_\alpha}})$ with $0<e_\alpha\le1$ and $v(t)=\mO(\frac{1}{t^{e_v}})$ with $e_v>0$. Then under Algorithm \ref{alg:dplr}, there holds
\begin{equation}\notag
\sum\limits_{i\in\mV}\E\|\betaf_i(t)-\betab^\ast\|=\left\{
\begin{aligned}
& \mO(t)&\textnormal{ if }e_\alpha=1;\\
& \mO(\exp(t^{1-e_\alpha}))&\textnormal{ otherwise.}
\end{aligned}
\right.
\end{equation}
\end{theorem}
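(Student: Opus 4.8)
The plan is to track the stacked error $\eb(t):=\betaf(t)-\1_k\otimes\betab^\ast\in\R^{km}$ and derive a one-step recursion for $\E\|\eb(t)\|$ directly from the dynamics (\ref{eq:dynamics}). Three structural facts drive the argument. First, since $\Wb$ is symmetric and row-stochastic it is doubly stochastic, so $\|\Wb\otimes\Ib_m\|=1$ and $(\Wb\otimes\Ib_m)(\1_k\otimes\betab^\ast)=\1_k\otimes\betab^\ast$; the consensus point is a fixed point of the averaging operator. Second, because $\Omega$ is convex and $\betab^\ast\in\Omega$, the block projection is non-expansive about the optimum, giving $\|\PPo^\ast(\betaf(t))-\1_k\otimes\betab^\ast\|\le\|\eb(t)\|$. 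Third, and specific to least squares, the gradient is affine, so $G(\PPo^\ast(\betaf(t)))=\tX(\PPo^\ast(\betaf(t))-\1_k\otimes\betab^\ast)+G(\1_k\otimes\betab^\ast)$, where $G(\1_k\otimes\betab^\ast)$ is the constant vector of local gradients evaluated at $\betab^\ast$.

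Substituting these into (\ref{eq:dynamics}) and writing $\|\tX\|=\max_{i\in\mV}\|\Xb_i\|^2=:\lambda$, I would obtain the pathwise bound $\|\eb(t+1)\|\le(1+\lambda\alpha(t))\|\eb(t)\|+\alpha(t)\|G(\1_k\otimes\betab^\ast)\|+\|\omegab(t+1)\|$. Taking expectations and controlling the fresh Laplace noise by its second moment, $\E\|\omegab(t+1)\|\le(\E\|\omegab(t+1)\|^2)^{1/2}=\sqrt{2km}\,v(t+1)$, yields the scalar linear recursion $a(t+1)\le(1+\lambda\alpha(t))a(t)+c_1\alpha(t)+c_2 v(t+1)$ with $a(t)=\E\|\eb(t)\|$ and constants $c_1,c_2$ depending only on the data and $B_\Omega$. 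Only linearity of expectation is needed here, since the preceding inequality holds pointwise in the noise realization.

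Unrolling this recursion (a discrete Gr\"onwall step) gives $a(t)\le\big[a(0)+\sum_{s=0}^{t-1}(c_1\alpha(s)+c_2 v(s+1))\big]\prod_{s=0}^{t-1}(1+\lambda\alpha(s))\le\big[a(0)+\sum_{s=0}^{t-1}(c_1\alpha(s)+c_2 v(s+1))\big]\exp\big(\lambda\sum_{s=0}^{t-1}\alpha(s)\big)$. The growth is then governed entirely by $\sum_{s<t}\alpha(s)$: when $e_\alpha=1$ this sum is $\Theta(\log t)$, so the exponential factor grows only polynomially and, together with the polynomially growing bracket, yields $\mO(t)$; when $0<e_\alpha<1$ it is $\Theta(t^{1-e_\alpha})$, so the exponential factor dominates the polynomial bracket and yields $\mO(\exp(t^{1-e_\alpha}))$. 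Since $v(t)\to0$ with $e_v>0$, the noise sum contributes at most $\mO(t)$ and is absorbed in either regime. Finally, Cauchy--Schwarz converts the stacked estimate to the stated quantity through $\sum_{i\in\mV}\E\|\betaf_i(t)-\betab^\ast\|\le\sqrt{k}\,a(t)$.

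The hard part will be the gradient term. Because $\betaf(t)$ is injected with noise and is never itself projected, its norm cannot be bounded by a constant directly; the remedy is precisely the affine decomposition above, which replaces the unbounded state by the error $\|\eb(t)\|$ at the price of the multiplicative factor $1+\lambda\alpha(t)$. It is exactly this factor that forces the Gr\"onwall inequality and hence the exponential dependence on $\sum_s\alpha(s)$, which is what produces the qualitative split between the two step-size regimes. The remaining work is bookkeeping: bounding the Laplace moments, and, when reading off the precise orders in $t$ (in particular the linear order at $e_\alpha=1$), absorbing the Lipschitz constant $\lambda$ and the step-size constants into the $\mO$-notation.
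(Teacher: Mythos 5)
Your setup is sound and in several places more careful than the paper's own proof: you make explicit the non-expansiveness of $\PPo$ about $\betab^\ast\in\Omega$, the fixed-point identity $(\Wb\otimes\Ib_m)(\1\otimes\betab^\ast)=\1\otimes\betab^\ast$, and the affine gradient decomposition, all of which the paper's recursion (\ref{eq:acc1}) uses only implicitly. Your scalar recursion $a(t+1)\le(1+\lambda\alpha(t))a(t)+c_1\alpha(t)+c_2v(t+1)$ is exactly the paper's starting point (\ref{eq:acc2})--(\ref{eq:acc3}), and the Laplace moment bound and the final Cauchy--Schwarz step are fine.

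The gap is in the Gr\"onwall unrolling. You bound every tail product $\prod_{r=s+1}^{t-1}(1+\lambda\alpha(r))$ by the \emph{full} product, obtaining $a(t)\le\bigl[a(0)+\sum_{s<t}(c_1\alpha(s)+c_2v(s+1))\bigr]\exp\bigl(\lambda\sum_{s<t}\alpha(s)\bigr)$. This is too coarse to yield the theorem. Take $e_\alpha=1$ and $0<e_v<1$, which the hypothesis allows: the bracket contains $\sum_{s<t}v(s)=\Theta(t^{1-e_v})$ and the exponential factor is of order $t$ (granting, as the paper silently does, that the constants in the exponent are $1$), so your bound is $\mO(t^{2-e_v})$, not $\mO(t)$. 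Your claim that the noise sum is ``absorbed in either regime'' overlooks that it gets multiplied by the full exponential factor. The point is that noise injected at time $s$ should only be amplified by the tail factor $\exp(\sum_{r>s}\alpha(r))\approx t/s$, not by $\exp(\sum_{r<t}\alpha(r))\approx t$; keeping the tail products attached to each forcing term, the noise contribution becomes $t\sum_s v(s)/s = t\cdot\mO(\sum_s s^{-(1+e_v)})$, and this weighted sum converges for every $e_v>0$. That is precisely how the paper proceeds in (\ref{eq:acc3})--(\ref{eq:acc5}): it retains $\exp(\sum_{\kappa=\tau+1}^t\alpha(\kappa))$ inside the sum and then uses convergence of $\sum_\tau\frac{\alpha(\tau)+v(\tau)}{\tau}$ (resp.\ $\sum_\tau\frac{\alpha(\tau)+v(\tau)}{\exp(\tau^{1-e_\alpha})}$). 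The same defect shows up in your $e_\alpha<1$ case, where the leftover polynomial factor $t^{1-e_v}$ in the bracket cannot be absorbed into $\mO(\exp(t^{1-e_\alpha}))$. (A separate caveat: absorbing $\lambda$ and the step-size constant into the $\mO$ at the level of the exponent is not legitimate, since $\exp(\lambda C\ln t)=t^{\lambda C}\neq\mO(t)$ when $\lambda C>1$; but the paper's proof makes the same normalization, so I do not count it against you.)
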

\begin{proof}
We will continue to use the notations in the proof of Theorem \ref{thm:differential_privacy}. Define $\eb(t)=\betaf(t)-\1\otimes\betab^\ast$. By subtracting $\1\otimes\betab^\ast$ on both sides of (\ref{eq:dynamics}), one has
\begin{align}
\eb(t+1)=(\Wb\otimes\Ib-\alpha(t)\tX)\eb(t)+\alpha(t)(\ty-\tX(\1\otimes\betab^\ast))+\omegab(t+1).\label{eq:acc1}
\end{align}
Then it follows (\ref{eq:acc1})
\begin{align}
\|\eb(t+1)\|^2 &\le \eb(t)^\top(\Wb\otimes\Ib-\alpha(t)\tX)^2\eb(t)+\alpha^2(t)\|\ty-\tX(\1\otimes\betab^\ast)\|^2+\|\omegab(t+1)\|^2\notag\\
&\quad+\alpha(t)\|\ty-\tX(\1\otimes\betab^\ast)\|\|\Wb\otimes\Ib-\alpha(t)\tX\|\|\eb(t)\| + g(\omegab(t+1)),\label{eq:acc2}
\end{align}
where $g:\R^{km}\to\R^{km}$ is linear. Due to the nonnegativity and irreducibility of $\Wb$ \cite{horn2012matrix}, there holds $-1\le\|\Wb\|<1$, and thereby $\|\Wb\otimes\Ib-\alpha(t)\tX\|\le1+\alpha(t)$. Then by (\ref{eq:acc2})
\begin{align}
\E\|\eb(t+1)\|^2 &= \mO\big((1+\alpha(t))^2\E\|\eb(t)\|^2+\alpha(t)(1+\alpha(t))\|\eb(t)\|+\alpha^2(t)+v^2(t)\big)\notag\\
&=\mO\big(\big((1+\alpha(t))\E\|\eb(t)\|+\alpha(t)\big)^2+v^2(t)\big),\notag
\end{align}
which further leads to
\begin{align}
\E\|\eb(t+1)\| &= \mO\big((1+\alpha(t))\E\|\eb(t)\|+\alpha(t)+v(t)\big)\notag\\
&=\mO\bigg(\prod\limits_{\tau=0}^t(1+\alpha(\tau))+\sum\limits_{\tau=0}^t(\alpha(\tau)+v(\tau))\prod\limits_{\kappa=\tau+1}^{t}(1+\alpha(\kappa))\bigg)\notag\\
&=\mO\bigg(\exp\big(\sum\limits_{\tau=0}^t\alpha(\tau)\big)+\sum\limits_{\tau=0}^t\big(\alpha(\tau)+v(\tau)\big)\exp\big(\sum\limits_{\kappa=\tau+1}^t\alpha(\kappa)\big)\bigg).\label{eq:acc3}
\end{align}
It is a fact
$
\sum\limits_{\tau=t^\prime}^{t} \frac{1}{\tau^e}=\mO\big(\int_{t^\prime-1}^{t}\frac{1}{\tau^e}\mathrm{d\tau}\big)
$
for all $t\ge t^\prime>1$. Based on (\ref{eq:acc3}), one has
\begin{equation}\label{eq:acc5}
\E\|\eb(t+1)\|=\left\{
\begin{aligned}
& \mO\bigg(t+t\sum\limits_{\tau=0}^t \frac{\alpha(\tau)+v(\tau)}{\tau}\bigg)&\textnormal{ if }e_\alpha=1;\\
& \mO\bigg(\exp(t^{1-e_\alpha})+\exp(t^{1-e_\alpha})\sum\limits_{\tau=0}^t \frac{\alpha(\tau)+v(\tau)}{\exp(\tau^{1-e_\alpha})}\bigg)&\textnormal{ otherwise.}
\end{aligned}
\right.
\end{equation}
Clearly, both $\sum\limits_{\tau=0}^\infty \frac{\alpha(\tau)+v(\tau)}{\tau}$ and $\sum\limits_{\tau=0}^\infty \frac{\alpha(\tau)+v(\tau)}{\exp(\tau^{1-e_\alpha})}$ are convergent, the proof is completed by (\ref{eq:acc5}).
\end{proof}

\section{Conclusions}

In this paper, a differentially private decentralized algorithm for linear regression was proposed. Not only a theoretic privacy budget was provided, but the precision was carefully investigated and shown to be bounded by $\mO(t)$ or $\mO(\exp(t^{1-e}))$. Future work includes the tradeoff analysis between efficiency and privacy, and the relaxation of the projection operation.

\bibliographystyle{unsrtnat}
\bibliography{main}

\begin{thebibliography}{8}
\providecommand{\natexlab}[1]{#1}
\providecommand{\url}[1]{\texttt{#1}}
\expandafter\ifx\csname urlstyle\endcsname\relax
  \providecommand{\doi}[1]{doi: #1}\else
  \providecommand{\doi}{doi: \begingroup \urlstyle{rm}\Url}\fi

\bibitem[Nedic and Ozdaglar(2009)]{nedic2009distributed}
Angelia Nedic and Asuman Ozdaglar.
\newblock Distributed subgradient methods for multi-agent optimization.
\newblock \emph{IEEE Transactions on Automatic Control}, 54\penalty0
  (1):\penalty0 48, 2009.

\bibitem[Nedic et~al.(2010)Nedic, Ozdaglar, and Parrilo]{nedic2010constrained}
Angelia Nedic, Asuman Ozdaglar, and Pablo~A Parrilo.
\newblock Constrained consensus and optimization in multi-agent networks.
\newblock \emph{IEEE Transactions on Automatic Control}, 55\penalty0
  (4):\penalty0 922--938, 2010.

\bibitem[Boyd et~al.(2011)Boyd, Parikh, Chu, Peleato, Eckstein,
  et~al.]{boyd2011distributed}
Stephen Boyd, Neal Parikh, Eric Chu, Borja Peleato, Jonathan Eckstein, et~al.
\newblock Distributed optimization and statistical learning via the alternating
  direction method of multipliers.
\newblock \emph{Foundations and Trends{\textregistered} in Machine learning},
  3\penalty0 (1):\penalty0 1--122, 2011.

\bibitem[Huang et~al.(2015)Huang, Mitra, and Vaidya]{huang2015differentially}
Zhenqi Huang, Sayan Mitra, and Nitin Vaidya.
\newblock Differentially private distributed optimization.
\newblock In \emph{Proceedings of the 2015 International Conference on
  Distributed Computing and Networking}, page~4. ACM, 2015.

\bibitem[Dwork(2011)]{dwork2011differential}
Cynthia Dwork.
\newblock Differential privacy.
\newblock \emph{Encyclopedia of Cryptography and Security}, pages 338--340,
  2011.

\bibitem[Liu et~al.(2018)Liu, Lou, Anderson, and Shi]{liu2018network}
Yang Liu, Youcheng Lou, Brian Anderson, and Guodong Shi.
\newblock Network flows that solve least squares for linear equations.
\newblock \emph{arXiv preprint arXiv:1808.04140}, 2018.

\bibitem[Horn and Johnson(2012)]{horn2012matrix}
Roger~A Horn and Charles~R Johnson.
\newblock \emph{Matrix {A}nalysis}.
\newblock Cambridge university press, 2012.

\bibitem[McSherry(2009)]{mcsherry2009privacy}
Frank~D McSherry.
\newblock Privacy integrated queries: an extensible platform for
  privacy-preserving data analysis.
\newblock In \emph{Proceedings of the 2009 ACM SIGMOD International Conference
  on Management of data}, pages 19--30. ACM, 2009.

\end{thebibliography}

\end{document}